\newtheorem*{mcd}{McDiarmid Inequality}{\bfseries}{\itshape}
\newtheorem{theorem}{Theorem}
\newtheorem{lemma}{Lemma}
\newtheorem{definition}{Definition}
\newcommand{\De}{\mathcal{D}}
\newcommand{\Ch}{\mathcal{C}}
\newcommand{\E}{\mathbb{E}}
\begin{document}
\title{Approximating Nash Equilibria in \\ Tree Polymatrix Games}

\author{Siddharth Barman\thanks{Indian Institute of Science. \tt{barman@csa.iisc.ernet.in}} \and Katrina Ligett\thanks{California Institute of Technology and The Hebrew University of Jerusalem. \tt{katrina@caltech.edu}} \and Georgios Piliouras\thanks{Singapore University of Technology and Design. \tt{georgios@sutd.edu.sg}}}

\date{}

\maketitle

\abstract{
We develop a quasi-polynomial time Las Vegas algorithm for approximating Nash equilibria in polymatrix games over trees, under a mild renormalizing assumption. Our result, in particular, leads to an expected polynomial-time algorithm for computing approximate Nash equilibria of tree polymatrix games in which the number of actions per player is a fixed constant. Further, for trees with constant degree, the running time of the algorithm matches the best known upper bound for approximating Nash equilibria in bimatrix games (Lipton, Markakis, and Mehta 2003). 

Notably, this work closely complements the hardness result of Rubinstein (2015), which establishes the inapproximability of Nash equilibria in polymatrix games over constant-degree bipartite graphs with two actions per player.
}

\section{Introduction}

%Computation of (approximate) Nash equilibria for games
The complexity of equilibrium computation is a central area of research in algorithmic game theory. Recent years have seen significant progress in this line of work, especially in the context of two-player games~\cite{DGP, chen2009settling, daskalakis2013complexity, AR, LMM, daskalakis2006note, DMPprogress, tsaknakis2007optimization}. Furthermore, the computation of approximate Nash equilibrium in games over networks has emerged as an important research direction~\cite{kearns2001graphical, ortiz2002nash, elkind2006nash, daskalakis2009network, cai2011minmax}. Motivation for studying such multiplayer games stems in part from the prevalence and importance of large networks of interconnected, self-interested agents. %The decentralized nature of the internet itself, networks of financial institutions, as well as vast electronic social networks introduces a novel analytic challenge, the quantitative understanding of stable outcomes of strategic interactions over large networks, i.e., computation of equilibria in large network games.

The prototypical family of large network games is that of \emph{polymatrix games}. These games merge two classical concepts, two-player games and networks. In a polymatrix game, each player corresponds to a node in a network, and each edge encodes a two-player game between the two endpoints of the edge. A player's payoff is the sum of her payoffs across the bimatrix games (edges) she participates in.
Polymatrix games capture complex settings with arbitrarily many players while keeping the description complexity of the game polynomially small in the number of players. Computation of equilibria for polymatrix games is hence a natural test case, and has emerged at the boundary of computational tractability.

The seminal {\rm PPAD} hardness reductions for computing  $\varepsilon$-Nash equilibria\footnote{In an $\varepsilon$-Nash equilibrium, a player can gain at most $\varepsilon$ by unilaterally deviating from her current strategy.} by Daskalakis, Goldberg, and Papadimitriou \cite{DGP} along with their extensions by Chen, Deng, and Teng \cite{chen2009settling}  to two-player games were crucially developed within the context of polymatrix games.\footnote{These hardness result hold for polynomially small  $\varepsilon$.} Recently, Rubinstein \cite{AR}  strengthened these inapproximability guarantees by establishing that there exists a constant $\varepsilon$ such that finding an $\varepsilon$-Nash equilibrium in polymatrix games over bipartite graphs of constant degree is computationally hard. Our positive algorithmic result is inspired by this work and {\em explores the boundary between tractability and intractability of $\varepsilon$-Nash computation in polymatrix games.} \\

The study of equilibria in polymatrix games has had a long history \cite{Yanovskaya,howson1972equilibria}. To avoid  hardness, most algorithmic results have focused on structured subclasses of polymatrix games. These include polymatrix generalizations of zero-sum games where exact Nash equilibria can be computed in polynomial time \cite{daskalakis2009network,cai2011minmax}. Games on trees is another family of multiplayer games that has received attention~\cite{kearns2001graphical, ortiz2002nash, elkind2006nash}. The proposed algorithm in \cite{elkind2006nash} finds an exact Nash equilibrium in two-action games on paths and runs in polynomial time, but in the case of trees the running time may be exponential even if the degree of the underlying tree is bounded. In contrast, we study computation of approximate Nash equilibrium in trees of arbitrary degree, and develop an algorithm that runs in quasi-polynomial time. Finally, some interesting progress has been made in the case of general polymatrix games as well, where it has been shown that a $(0.5+\varepsilon)$-Nash equilibrium of a polymatrix game can be computed in time polynomial in the input size and $1/\varepsilon^2$ \cite{DFSS-WINE}.

 \paragraph{\bf Results.} 
 We develop a quasi-polynomial time algorithm for approximating Nash equilibrium in polymatrix games over trees under a mild renormalizing assumption on the players' payoffs. Specifically, instead of normalizing the entries of each bimatrix game to lie in $[0,1]$, which results in each player $i$'s payoff depending linearly on its degree, we normalize them to lie in $[0,1/\text{degree}(i)]$, so that players' total payoffs lie in $[0,1]$. Our results actually extend even under weaker renormalization conditions; see Section~\ref{sec:notation} for details. We show that, given an $n$-player, $m$-action normalized polymatrix game over a tree, we can find an $\varepsilon$-Nash equilibrium of the game in expected time \[ m^{O\left( \frac{ \log m (\log m + \log n - \log \varepsilon) }{\varepsilon^4} \right)}. \] 

Our approach immediately implies a polynomial time approximation scheme for computing Nash equilibria when the number of actions per player is  constant. The case of standard bimatrix games can be trivially captured in our setting via a single-edge polymatrix game. Further, for trees of constant degree our  framework yields an algorithm that finds an $\varepsilon$-Nash equilibrium in time $m^{O\left(\frac{\log m + \log n}{\varepsilon^2}\right)}$. Note that in the single edge case (i.e., the case of standard bimatrix games) this running-time bound matches the best known upper bound for approximating Nash equilibria~\cite{LMM}. 

\noindent    
{\bf Techniques.} We develop a dynamic program to find an approximate Nash equilibrium of the given tree polymatrix game. The idea is to root the underlying tree and process it in a bottom-up manner. For each node/player $p$ we maintain a set of mixed strategies---i.e., probability distributions over player's actions---that can be extended into a ``partial'' (approximate) equilibrium of the subtree rooted at the node. That is, for each mixed strategy assigned to $p$ there exist mixed strategies for the descendants of $p$ under which no descendant can benefit more than $\varepsilon$, in expectation, by unilateral deviation. We find such extendable mixed strategies of a player $p$ after processing all of its children; in other words, we start from the leaves of the tree and move towards the root. Note that such an extendable mixed strategy for the root corresponds to an approximate Nash equilibrium of the game. Also, it is worth pointing out that the tree structure enables us to find partial equilibria of disjoint subtrees separately. In particular, the fact that the utilities of players depend only on the actions of its parent and its children implies that disjoint subtrees can be processed separately.  

In and of itself, using a dynamic program to find an approximate Nash equilibrium over a tree is a natural idea. In fact, similar approaches have been adopted in prior work; see, e.g.,~\cite{elkind2006nash}. The key technical contribution in this paper is to show that the update step in the dynamic program can be performed in quasi-polynomial time. To do this, we focus on a specific set of mixed strategies $U$, which is the set of all uniform distributions with support size polynomial in the approximation parameter $\varepsilon$ and logarithmic in the number of players and the number of actions; see Section~\ref{sec:notation} for a formal definition. It was shown in~\cite{BBP} that every multiplayer game admits an $\varepsilon$-Nash equilibrium wherein the mixed strategy of each player is contained in $U$. Hence, given an $n$-player game, an exhaustive search over the set $U^n$ is guaranteed to find an approximate Nash equilibrium. But, such a search runs in time exponential in $n$.  We show that for tree polymatrix games, an exponential-time exhaustive search can be bypassed. The idea is to follow the above mentioned dynamic program and consider, for each player $p$, mixed strategies in the set $U$  that can be extended into partial equilibria of the subtree rooted at $p$. To perform the update step in the dynamic program we employ a linear program that, interestingly, gives a tight characterization of mixed strategies that can be extended. Together, these ideas lead us to a quasi-polynomial time approximation algorithm.

\section{Notation and Preliminaries} 
\label{sec:notation}

We study games with $n$ players and $m$ actions per player.\footnote{We assume that each player has $m$ actions for ease of presentation. The developed result directly extends to the case wherein the number of actions of each player is different.} Write $[n]$ and $[m]$ to denote the set of players and the set of actions of each player, respectively. The utilities of the players are normalized between $0$ and $1$; in particular, for each player $p$ we have utility  $u_p: [m]^n \rightarrow [0,1]$. Let $\Delta^m$ be the set of probability distributions over $[m]$. In addition, for mixed strategy profile $x = (x_q)_{q \in [n]} \in \Delta^m \times \ldots \times \Delta^m$, we denote the expected utility of player $p$ by $u_p(x)$. Following standard notation, we use $x_{-p}$ to denote the mixed strategy profile of all players besides $p$.   

\begin{definition}[$\varepsilon$-Nash equilibirum]
A mixed strategy profile $x = (x_q)_{q \in [n]}$, where each $x_q \in \Delta^m$, is said to be an $\varepsilon$-Nash equilibrium iff for every player $p \in [n]$ and action $a \in [m]$ we have $u_p(x) \geq u_p(a, x_{-p}) - \varepsilon$.
\end{definition}

Here, setting $\varepsilon = 0$ gives us the definition of a Nash equilibrium. 

\paragraph{\bf Polymatrix Games.}
In a polymatrix game, the players correspond to vertices of a graph $G=(V,E)$ and the utility of each player $p \in V$ depends only on her action and the actions of her neighbors. Moreover, the utility of each player is {\em separable}, i.e., for each edge $(p,q) \in E$ we have a bimatrix game specified by  $m \times m$ matrices $A_{p,q}$ and $A_{q,p}$, and the utility of player $p$, under action profile $(a_q)_{q \in [n]} \in [m]^n$, is specified as follows $u_p(a_1, a_2,\ldots, a_n) := \sum_{q : (p,q)\in E }\  e_{a_p}^T \ A_{p,q} \ e_{a_q}$. Here, $e_k \in \mathbb{R}^m$ denotes the standard basis vector with $1$ in the $k$th component and $0$'s elsewhere. Along these lines, for a mixed strategy profile $(x_q)_{q \in [n]} \in \left(\Delta^m\right)^n$, the expected utility of player $p$, $u_p(x_1, x_2, \ldots, x_n) := \sum_{q : (p,q)\in E} x_p^T A_{p,q} x_q$.

%$u_i(a_1, a_2,\ldots, a_n) := \sum_{j : (p,q)\in E }\  e_{a_i}^T \ A_{p,q} \ e_{a_j}$
As mentioned above, the utility of each player is normalized between $0$ and $1$. A typical way to accomplish this normalization (see, e.g.,~\cite{DFSS-WINE}) is to assume that for each player $p \in [n]$ the associated payoff matrices, $A_{p,q}$s, are entry-wise between $0$ and $1$, and the utility of player $p$ with degree $d$ (in the graph) is obtained by dividing the sum of the payoffs by $d$, i.e., $u_p(a_1, a_2,\ldots, a_n) := \frac{1}{d} \sum_{q : (p,q)\in E }\  e_{a_p}^T \ A_{p,q} \ e_{a_q}$. This normalization ensures that the same approximation guarantee is achieved for all players, irrespective of their degrees. If, instead, one assumes that entry-wise the $A_{p,q}$s are between $0$ and $1$ and simply add the payoffs $e_{a_i}^T  A_{p,q} e_{a_j}$, then the approximation guarantee for players with higher degree---since $\varepsilon$ is the same of all the players---is stronger. This would lead to an undesirable,  nonuniform approximation bound. 

The degree-normalized scaling mentioned above is equivalent to the assumption that for player $p \in [n]$, with degree $d$, the matrices $A_{p,q}$s are contained in $[0,1/d]^{m \times m}$ and $u_p(a_1,\ldots, a_n) := \sum_{q : (p,q)\in E }\  e_{a_p}^T \ A_{p,q} \ e_{a_q}$. In this paper we in fact consider a more general setup in which, for a player with degree $d$, entries of $A_{p,q}$s are between $0$ and $\max\left\{\frac{1}{d}, \frac{\varepsilon}{2\sqrt{6 d \log m}}\right\}$. Here, again we assume that for each action profile $a$ we have $u_i(a) \in [0,1]$.  Developing a quasi-polynomial time algorithm without an entry-wise assumption (i.e., without requirement (i) in the following definition) remains an interesting direction for future work. 
 
\begin{definition}[Normalized Polymatrix Game]
\label{def:norm}
Let $\mathcal{G}$ be an $n$-player $m$-action polymatrix game over graph $G=(V,E)$ and with payoff matrices $A_{p,q}$ and $A_{q,p}$, for $(p,q) \in E$. Given parameter $\varepsilon$, we say that $\mathcal{G}$ is normalized iff for each player $p \in [n]$ we have (i) the entries of $A_{p,q}$s are contained in $\left[0, \max\left\{\frac{1}{d}, \frac{\varepsilon}{2\sqrt{ 6 d \log m}}\right\}\right]$; here $d$ is the degree of player $p$ in $G$, and (ii) for every action profile $(a_1, \ldots, a_n) \in [m]^n$, the utility $u_p(a_1, \ldots, a_n) := \sum_{q : (p,q) \in E }\  e_{a_p}^T \ A_{p,q} \ e_{a_q}$ is between $0$ and $1$. 
\end{definition}

% duplicated at start of subsection:
%In a polymatrix game, the utility of a player $p$ depends only on the actions of its neighbors. 
%In particular,
Given mixed strategies of the neighbors of a player $p$, say $(x_q)_{q : (p,q) \in E}$, $x_p \in \Delta^m$ is said to be an {\em $\varepsilon$-best response} of $p$ against $(x_q)_{q : (p,q) \in E}$ if $p$ cannot benefit more than $\varepsilon$ in expectation by deviating from $x_p$, i.e.,   
\begin{align}
\label{ineq:br}
\sum_{q : (p,q) \in E} x_p^T A_{p,q} x_q  & \geq  \max_{j \in [m] } \left( \sum_{q : (p,q) \in E}  e_j^T A_{p,q} \ x_q \right) - \varepsilon.
\end{align}

This paper studies polymatrix games in which the underlying graph $G$ is a tree. Note that a polymatrix game with exactly two players over a single edge $(p,q)\in E$---which is trivially a tree---corresponds to a bimatrix game between players $p$ and $q$. Hence, computation of an approximate Nash equilibrium in tree polymatrix games is at least as hard as computation of approximate Nash equilibrium in bimatrix games. Therefore, our running-time benchmark for finding an $\varepsilon$-Nash equilibrium is quasi-polynomial: $m^{O\left(\frac{\log m}{\varepsilon^2}\right)}$, which is the best known upper bound for approximating Nash equilibria in bimatrix games~\cite{LMM}. 

\paragraph{\bf Uniform Probability Distributions.}
A probability distribution $x \in \Delta^m$ is said to be $b$ uniform if it is a uniform distribution over a size-$b$ multiset of $[m]$. Write $U \subset \Delta^m$ to denote the set of all $\left(\frac{8\left(\ln m + \ln n - \ln \varepsilon + \ln 8 \right)}{\varepsilon^2}\right)$-uniform probability distributions. Note that 
\begin{align}
\label{ineq:usize}
|U| & = m^{O\left(\frac{\log m + \log n - \log \varepsilon}{\varepsilon^2}\right)}
\end{align}

As mentioned above, the work of Babichenko et al.~\cite{BBP} establishes that every $n$-player $m$-action game admits an $\varepsilon$-Nash equilibrium $x=(x_q)_{q \in [n]}$ such that $x_q \in U$ for all $q \in [n]$. Hence, an exhaustive search over the set $U^n$ is guaranteed to find an $\varepsilon$-Nash equilibrium. Note that the running time of such a search is $m^{O\left(\frac{n \left(\log m + \log n - \log \varepsilon \right)} {\varepsilon^2}\right)}$, which is exponential in $n$. In contrast to this exponential-time algorithm, we show that for tree polymatrix games an approximate Nash equilibrium can be computed in expected time $m^{O\left( \frac{ \log m (\log m + \log n - \log \varepsilon) }{\varepsilon^4} \right)}$, which is quasi-polynomial in $n$ and $m$.

\hfill \\
Next we state McDiarmid's inequality~\cite{mcd}. We use this concentration bound to prove our main result.
%taking values in the set $\mathcal{Z}$.
\begin{mcd}%[McDiarmid's inequality]
\label{thm:mcd}
Let $Z_1, Z_2, \ldots, Z_d \in \mathcal{Z}$ be independent random variables and  $f: \mathcal{Z}^d \rightarrow \mathbb{R}$ be a function of $Z_1, Z_2, \ldots, Z_d$. If for all $i \in [d]$ and for all $z_1, z_2, \ldots, z_d, z_i' \in \mathcal{Z}$ the function $f$ satisfies
\begin{align*}
| f(z_1, \ldots, z_i, \ldots, z_d) - f(z_i, \ldots, z_i', \ldots, z_d) | & \leq c_i,
\end{align*}
then for $\delta >0$,
\begin{align*}
\Pr( |f - \E[f]| \geq \delta ) & \leq 2 \ \textrm{exp} \left(\frac{-2 \delta^2}{\sum_{i=1}^d c_i^2} \right).
\end{align*}
\end{mcd}

\section{Quasi-Polynomial Time Algorithm}
This section develops the dynamic program that finds an approximate Nash equilibrium. We will consider $G$ to be a rooted tree and process it in a bottom-up manner. We start with players all whose descendants are leaves, and then iteratively proceed onto the remaining players. 

Write $\Ch(q)$ and $\De(q)$ to denote the set of children and the set of descendants of player $q$, respectively. The iterative process maintains a set $U_{p,q}(z)$ for each parent-child pair $(p,q) \in E$ and each $z \in U$.\footnote{Recall that $U$ is the set of all $O\left(\frac{\log m + \log n - \log \varepsilon}{\varepsilon^2}\right)$-uniform probability distributions.} Intuitively, $U_{p,q}(z)$ denotes the set of mixed strategies for player $q$ that can be extended into a ``partial'' $\varepsilon$-Nash equilibrium of the subtree rooted at $q$. Here $p$, the parent of $q$, is playing mixed strategy $z$ and might not be best responding. Formally, the inductive definition of the sets $U_{p,q}(z)$s is as follows:
\begin{itemize}
\item  If $q$ is a leaf player (i.e., $q$ corresponds to a leaf in tree $G$), then $U_{p,q} (z) :=\{ y \in U \mid y$ is an $\varepsilon$-best response of $q$ against $z\} $. 
\item Else, if $q$ is a not a leaf player, we define $U_{p,q} (z) :=\{ y \in U \mid $ there exist mixed strategies $(x_c)_{c \in \Ch(q)} \in \prod_{c \in \Ch(q)} U_{q,c}(y)  $ such that $y$ is an $\varepsilon$-best response of $q$ against $(x_c)_{c \in \Ch(q)}$ and $z \} $; here, mixed strategy $z$ is associated with parent player $p$. 
\end{itemize}

We also define the set $U_r$ for the root $r$ of tree $G$: $U_r := \{ y \in U \mid$ there exist mixed strategies $(x_c)_{c \in \Ch(r)} \in \prod_{c \in \Ch(r)} U_{r,c}(y)  $ such that $y$ is an $\varepsilon$-best response of $r$ against $(x_c)_{c \in \Ch(r)} \}$. 

If $y \in U_{p,q}(z)$ and $q$ is not a leaf, then, by the above definition, there exist mixed strategy profiles $(x_c)_{c \in \Ch(q)} \in \prod_{c \in \Ch(q)} U_{q,c}(y)$ such that $y$ is an $\varepsilon$-best response of $q$ against $(x_c)_{c \in \Ch(q)}$ and $z$. We will use $E_{p,q}(z,y)$ to denote such a collection of mixed strategies, $(x_c)_{c \in \Ch(q)}$. 

Along these lines, for the root $r$ of the tree $G$ we define $E_r(y)$, for each $y \in U_r$, to be a collection of mixed strategies $(x_c)_{c \in \Ch(r)} \in \prod_{c \in \Ch(r)} U_{r,c}(y)  $ such that $y$ is an $\varepsilon$-best response of $r$ against $(x_c)_{c \in \Ch(r)} $.

Note that mixed strategies in $E_{p,q}(z,y)$ extend $y$ into a ``partial'' $\varepsilon$-Nash equilibrium of the subtree rooted at $q$. Specifically, we can inductively use $E_{p,q}(z,y)$, then $E_{q,c}(y, x_c)$, for each $c \in \Ch(q)$, and so on, to determine mixed strategies $(x_s)_{s \in \De(q)}$ for each descendant $s \in \De(q)$ such that no player in the subtree rooted at $q$ can benefit more than $\varepsilon$, in expectation, by deviating unilaterally. Here we do not assert that the parent player $p$ is at an approximate equilibrium. In addition, note that the utilities of all the players $s \in \De(q) \cup \{ q \}$ depend only on the mixed strategies of players in $\De(q) \cup \{ p,q \}$ and, hence, these utilities can be determined even if the mixed strategies of players in $[n]\setminus \left( \De(q) \cup \{ p,q \} \right)$ are unspecified.

{
\begin{algorithm}
{{\bf Given:} A normalized polymatrix game over tree $G=(V,E)$. {\bf Return:} An $\varepsilon$-Nash equilibrium of the game.}
%\captionof{algorithm}{Algorithm for finding $\varepsilon$-Nash equilibrium in tree polymatrix games}\label{alg:dp}
\caption{Algorithm for finding $\varepsilon$-Nash equilibrium in tree polymatrix games}
\label{alg:dp}
  \begin{algorithmic}[1]
   \STATE Initialize processed set $P$ to be the leaves in $G$ and all $U_{p,q}(z)  =  \varnothing$ %  $ \forall (p,q) \in E $ and $z \in U$. 
%   \COMMENT{$P$ is set of processed players}
\WHILE{$V \setminus P \neq \phi$}
\STATE{Select $p \in V \setminus P$ such that $\Ch(p) \subseteq P$}
\FORALL{$q \in \Ch(p)$ and $z \in U$}
%\STATE{Initialize $U_{p,q}(z) = \phi$}
\FORALL{$y \in U$}
\IF{$q$ is a leaf node and $y$ is an $\varepsilon$-best response of $q$ against $z$}
\STATE Update $U_{p,q}(z) \leftarrow U_{p,q}(z) \cup \{ y \}$
\ELSIF{there exist mixed strategy profiles $(x_c)_{c \in \Ch(q)} \in \prod_{c \in \Ch(q)} U_{q,c}(y)$ such that $y$ is an $\varepsilon$-best response against $(x_c)_{c \in \Ch(q)}$ and $z$} \label{step:test}
\STATE Update $U_{p,q}(z) \leftarrow U_{p,q}(z) \cup \{ y \}$ and set $E_{p,q}(z,y) \leftarrow (x_c)_{c \in \Ch(q)}$  \COMMENT{There could be multiple tuples  $(x_c)_{c \in \Ch(q)}$ that satisfy this best-response condition. We set $E_{p,q}(z,y)$ to be any one of them.}
%\COMMENT{Note that for each $y \in U_{p,q}(z)$ we have mixed strategies $E_{p,q}(z,y)$} 
%\IF{for some $r \in \Ch(q)$ the set $U_{q,r}(y) = \phi$}
 %\STATE Exit loop and consider next $y$ 
 %\ENDIF 
%\IF{linear program LP($p$, $q$, $z$, $y$) is feasible} 
%\REPEAT
%\STATE For each $r \in \Ch(q)$, independently draw (mixed strategy) $x_r \in U_{q,r}(y)$ from distribution $(\alpha_x)_{x \in U_{q,r}(y)}$
%\UNTIL{$y$ is an $\varepsilon$-best response of $q$ against $(x_r)_{r \in \Ch(q)}$ and $z$}
%\STATE Update $U_{p,q}(z) \leftarrow U_{p,q}(z) \cup \{ y \}$
%\STATE Set $\chi_{q,r}(y,z) = x_r$ for each $ r \in \Ch(q)$. \label{step:back}
%\COMMENT{Note that $\chi_{q,r}(y,z) \in U_{q,r}(y)$ and $\chi_{q,r}(y,z)$s, for $r \in \Ch(q)$, certify that $y \in U_{p,q}(z)$}
\ENDIF
\ENDFOR
\ENDFOR
\STATE $P \leftarrow P \cup \{ p \} $
\ENDWHILE     
\STATE For the root $r$ of the tree $G$, initialize $U_r = \phi$.
\FORALL{$y \in U$}
\IF{there exist mixed strategy profiles $(x_c)_{c \in \Ch(r)} \in \prod_{c \in \Ch(r)} U_{r,c}(y)$ such that $y$ is an $\varepsilon$-best response against $(x_c)_{c \in \Ch(r)}$} \label{step:final}
\STATE Update $U_r \leftarrow U_r \cup \{ y\}$ and set $E_r(y) = (x_c)_{c \in \Ch(r)} $. Use Lemma~\ref{lem:backtrack} to find an $\varepsilon$-Nash equilibrium of the game
\ENDIF 
\ENDFOR
\end{algorithmic}
\end{algorithm}
}

Following the definition of $U_{p,q}(z)$, Algorithm~\ref{alg:dp} constructs these sets and extensions $E_{p,q}(z,y)$ for all parent-child pairs $(p,q) \in E$ and $z \in U$ in a bottom-up manner. At the end, the algorithm uses the set $U_{r}$ defined for the root $r$ to find an $\varepsilon$-Nash equilibrium of the game. Overall, the applicability of the sets $U_{p,q}$ and $E_{p,q}$ is established in Lemma~\ref{lem:backtrack} below.  \\

\begin{lemma}
\label{lem:backtrack}
Let $\mathcal{G}$ be a polymatrix game over a tree $G=(V,E)$. Given sets $U_{p,q}(z)$---for each parent-child pair $(p,q) \in E$---and mixed strategy collections $E_{p,q}(z,y)$---for $y \in U_{p,q}(z)$---along with a mixed strategy profile $\hat{y} \in U_r$ and associated collection $E_r(\hat{y})$ for the root $r$, we can find an $\varepsilon$-Nash equilibrium of the game $\mathcal{G}$ in time polynomial in $|U|$. 
\end{lemma}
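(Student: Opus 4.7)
The plan is to extract the $\varepsilon$-Nash equilibrium by a single top-down traversal of the rooted tree $G$, using the collections $E_{p,q}(\cdot,\cdot)$ and $E_r(\cdot)$ as pointers that stitch together local best-response certificates into a global strategy profile. I would start at the root $r$, assign it the mixed strategy $x_r := \hat{y}$, and read off $(x_c)_{c \in \Ch(r)} := E_r(\hat{y})$ from the given collection. Inductively, whenever the traversal has fixed the mixed strategy $x_q = y$ of an internal node $q$ whose parent $p$ has already been assigned strategy $x_p = z$, I would set $(x_c)_{c \in \Ch(q)} := E_{p,q}(z,y)$ and recurse into each child $c$ with its newly fixed strategy $x_c$. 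If $c$ is a leaf, no recursion is needed. The definition of $E_{p,q}(z,y)$ guarantees that each $x_c \in U_{q,c}(y)$, so $E_{q,c}(y,x_c)$ is well-defined at the next step, and the recursion never gets stuck.

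The main thing to verify is that the profile $x = (x_q)_{q \in [n]}$ produced this way is an $\varepsilon$-Nash equilibrium of $\mathcal{G}$. Here I would exploit the tree/polymatrix structure: the expected utility of any player $q$ depends only on the mixed strategies of its neighbors in $G$, namely its parent (if any) and its children. For an internal non-root node $q$ with parent $p$, membership of $y = x_q$ in $U_{p,q}(x_p)$ together with the choice $(x_c)_{c \in \Ch(q)} = E_{p,q}(x_p, x_q)$ says precisely that $x_q$ is an $\varepsilon$-best response of $q$ against $x_p$ and $(x_c)_{c \in \Ch(q)}$, which is the inequality (\ref{ineq:br}) for $q$ in the full profile $x$. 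For a leaf $q$ with parent $p$, membership $x_q \in U_{p,q}(x_p)$ directly encodes the $\varepsilon$-best response inequality against $x_p$ (the only neighbor). For the root $r$, the collection $E_r(\hat{y})$ certifies that $\hat{y}$ is an $\varepsilon$-best response of $r$ against $(x_c)_{c \in \Ch(r)}$, which is what (\ref{ineq:br}) demands since $r$ has no parent. Thus every player is $\varepsilon$-best responding in $x$, so $x$ is an $\varepsilon$-Nash equilibrium.

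The running time is essentially the cost of walking the tree: at each node we perform one lookup into a precomputed $E_{p,q}(z,y)$ (or $E_r(\hat{y})$), read off at most $|\Ch(q)|$ mixed strategies from $U$, and enqueue the children. The total work is $O(n)$ lookups, each taking time polynomial in $|U|$ to copy a mixed strategy record, giving overall time polynomial in $n$ and $|U|$ as claimed. The subtle part, and the only place where the argument could fail, is the "pointer chase" staying inside the precomputed sets; but as noted above this is built into the definition of the $E_{p,q}(z,y)$, so no additional search or consistency check is needed during backtracking.
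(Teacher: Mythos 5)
Your proposal is correct and follows essentially the same route as the paper's own proof: a top-down traversal from the root that chases the precomputed pointers $E_r(\hat{y})$ and $E_{p,q}(z,y)$, with the tree/polymatrix locality ensuring each player's $\varepsilon$-best-response condition is certified by exactly the set membership recorded at that node. If anything, you spell out the per-player verification more explicitly than the paper does.
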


\begin{proof}
The lemma is implied  directly by the underlying definitions. For each parent-child pair $(p,q) \in E$ there exists at least one set $U_{p,q}$ which is nonempty: as mentioned above, every $n$-player $m$-action game admits an $\varepsilon$-Nash equilibrium $(\hat{x}_q)_{q \in [n]} $ where each $\hat{x}_q \in U$. Hence, in particular, $ \hat{x}_q \in U_{p,q} (\hat{x}_p)$. Moreover, we have $ \hat{x}_r \in U_r$. 

In fact to find an $\varepsilon$-Nash equilibrium we can start with the given mixed strategy profile $x_r = \hat{y}$ then, for each $ c \in \Ch(r)$, pick the corresponding mixed strategy $x_c$ in $E_r(x_r)$. The definition of $E_r(x_r)$ implies that $x_c$ can be extended to obtain an $\varepsilon$-Nash equilibrium of the subtree rooted at $c$. We can in fact find such an  $\varepsilon$-Nash equilibrium by proceeding inductively down the tree; in particular, by setting $x_s$ for $s \in \Ch(c)$ to be the strategy associated with $s$ in $E_{r,c}(x_r, x_c)$). The definitions of $E_{p,q}$s ensure that this inductive process will run to completion and find an $\varepsilon$-Nash equilibrium of the subtree rooted at $c$. By repeating the process for each  $ c \in \Ch(r)$ we will find a mixed strategy $x_p$ for each player $p \in [n]$. Furthermore, the definitions of the underlying sets also imply that the found mixed strategy profile $(x_p)_{p \in [n]}$ is an $\varepsilon$-Nash equilibrium.   
\end{proof}

Algorithm~\ref{alg:dp} tests whether $y \in U_{p,q}(z)$ (i.e., tests whether there exist mixed strategy profiles $(x_c)_{c \in \Ch(q)} \in \prod_{c \in \Ch(q)} U_{q,c}(y)$ such that $y$ is an $\varepsilon$-best response of $q$ against $(x_c)_{c \in \Ch(q)}$ and $z$) in Step~\ref{step:test}, and the same idea is employed in Step~\ref{step:final}.  In particular, if the number of children of $q$ is $\Omega \left( \frac{\log m}{\varepsilon^2 }\right)$ then Algorithm~\ref{alg:dp} uses the following linear-programming relaxation LP($p,q,z,y$) to perform this test. The other case, wherein $|\Ch(q)| = o\left( \frac{ \log m}{\varepsilon^2} \right)$, is addressed directly via exhaustive search, see proof of Theorem~\ref{thm:main} for details. 

 %\footnote{Note that LP($p,q,z,y$) is used to test the feasibility of the specified system of linear inequalities in decision variables $\alpha_x$ and $\sigma_c$. The objective function of LP($p,q,z,y$) can be simply set to be $\max 0$.} 

\begin{align}
 \max_{\alpha_x, \sigma_c} & \ \ \ \ \ 0 \nonumber \\
  \textrm{subject to}  
& \ \ \sum_{x \in U_{q,c}(y) } \alpha_x = 1 \qquad \forall c \in \Ch(q) \nonumber \\
   & \ \ \sigma_c = \sum_{x \in U_{q,c}(y) } \alpha_x \ x \qquad \forall c \in \Ch(q) \nonumber  \\
   & \ y^T A_{q,p} z + \sum_{c \in \Ch(q)} y^T A_{q,c} \ \sigma_c \geq e_j^T A_{q,p} z + \sum_{c \in \Ch(q)} e_j^T A_{q,c} \  \sigma_c - \frac{\varepsilon}{2} \quad \forall j \in [m] \label{ineq:brp} \\
   & \ \ \alpha_x \geq 0 \qquad  \forall x \in \cup_{c \in \Ch(q)} \ U_{q, c}(y)  \nonumber \\ %\textrm{ and } \quad 
   & \ \ \sigma_c \in \Delta^m \qquad \  \forall c \in \Ch(q). \nonumber
   %& \ \ \sigma_r \in \Delta^m \qquad \forall r \in \Ch(q). \nonumber
    \end{align}   

Formally, Lemma~\ref{lemma:relax} below establishes that the feasibility of the linear program LP($p,q,z,y$) implies the required containment $y \in U_{p,q}(z)$, when $|\Ch(q)| = \Omega \left( \frac{ \log m}{\varepsilon^2} \right) $. Note that LP($p,q,z,y$) is parameterized by players $p$ and $q$ along with mixed strategies $z$ and $y$. In addition, inequality (3) in LP($p,q,z,y$) enforces that $y$ is an $\varepsilon/2$-best response against $\sigma_c$s and $z$. Also, if for some player $c \in \Ch(q)$ the set $U_{q,c}(y)$ is empty, then LP($p,q,z,y$) is trivially infeasible.

\begin{lemma}
\label{lemma:relax}
Let player $p$ be the parent of player $q$ in a normalized polymatrix game over rooted tree $G=(V,E)$. Also, let the number of children of $q$, $|\Ch(q)| =  \Omega\left( \frac{\log m}{\varepsilon^2} \right)$. Then, the feasibility of the linear program LP($p,q,z,y$), for mixed strategies $z,y \in U$, implies that $y \in U_{p,q}(z)$. Moreover, using a feasible solution of LP($p,q,z,y$) we can find mixed strategy profiles $E_{p,q}(z,y)$ via a sampling algorithm whose expected running time is polynomial in $|U|$.
\end{lemma}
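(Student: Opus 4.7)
The plan is to use the feasibility of LP$(p,q,z,y)$ to produce, via independent sampling, an actual tuple of mixed strategies $(x_c)_{c\in\Ch(q)}$ with $x_c \in U_{q,c}(y)$ such that $y$ is an $\varepsilon$-best response of $q$ against $(x_c)_{c\in\Ch(q)}$ and $z$. The $\varepsilon/2$ slack in inequality~(\ref{ineq:brp}) will absorb the sampling error, which is controlled by McDiarmid's inequality precisely because of the entry-wise payoff bound in Definition~\ref{def:norm}.

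\textbf{Rounding by independent sampling.} Fix a feasible solution $(\alpha_x, \sigma_c)$ of LP$(p,q,z,y)$. For each child $c\in\Ch(q)$, independently sample $x_c \in U_{q,c}(y)$ from the distribution $\{\alpha_x\}_{x \in U_{q,c}(y)}$. By construction, $\E[x_c] = \sigma_c$, and since $x_c$ is drawn from $U_{q,c}(y)$, we already have $x_c \in U_{q,c}(y)$ deterministically. It therefore suffices to show that with positive (indeed, high) probability, $y$ is an $\varepsilon$-best response of $q$ against $(x_c)_{c\in\Ch(q)}$ and $z$; i.e., for every $j \in [m]$,
\begin{equation*}
(e_j - y)^T A_{q,p} z + \sum_{c\in\Ch(q)} (e_j - y)^T A_{q,c}\, x_c \ \le\ \varepsilon.
\end{equation*}
By linearity of expectation and inequality~(\ref{ineq:brp}), the expectation of the left-hand side is at most $\varepsilon/2$ for every $j$. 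It remains to prove concentration.

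\textbf{Applying McDiarmid.} Fix $j \in [m]$ and define $f_j(x_{c_1}, \dots, x_{c_d}) := \sum_{c} (e_j - y)^T A_{q,c}\, x_c$ as a function of the independent samples (the $(e_j - y)^T A_{q,p} z$ term is constant). Let $d = |\Ch(q)|$ and let $B := \max\{1/\deg(q),\ \varepsilon/(2\sqrt{6\,\deg(q)\log m})\}$ be the entrywise bound on the $A_{q,c}$. For any coordinate $c$, replacing $x_c$ by $x_c'$ changes $f_j$ by at most $\|(e_j - y)^T A_{q,c}\|_\infty \cdot \|x_c - x_c'\|_1 \le 2B$. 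Because $d = \Omega(\log m/\varepsilon^2)$, the max is attained by the second term, so $B = \varepsilon/(2\sqrt{6\,\deg(q)\log m})$ and
\begin{equation*}
\sum_{c \in \Ch(q)} (2B)^2 \ \le\ \deg(q)\cdot \frac{4\varepsilon^2}{24\,\deg(q)\log m} \ =\ \frac{\varepsilon^2}{6\log m}.
\end{equation*}
McDiarmid's inequality with $\delta = \varepsilon/2$ then gives $\Pr[|f_j - \E f_j| \ge \varepsilon/2] \le 2\exp(-3\log m) = 2/m^3$. A union bound over the $m$ choices of $j$ yields that, with probability at least $1 - 2/m^2$, every deviation $|f_j - \E f_j|$ is at most $\varepsilon/2$, and hence the total additive error on every $j$ is at most $\varepsilon/2 + \varepsilon/2 = \varepsilon$. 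On this event, $y$ is an $\varepsilon$-best response of $q$ against $(x_c)_{c}$ and $z$, which by definition means $y \in U_{p,q}(z)$, and we may set $E_{p,q}(z,y) := (x_c)_{c\in\Ch(q)}$.

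\textbf{Expected running time.} The success probability of a single sampling round is $\ge 1/2$ (in fact $1-o(1)$), so by repeating the sampling we find a valid tuple in $O(1)$ expected rounds. Each round requires, for each child $c$, sampling from a distribution over $U_{q,c}(y) \subseteq U$ and then checking the $m$ best-response inequalities, which takes time polynomial in $|U|$ and $n$. Hence the total expected running time is polynomial in $|U|$, as claimed.

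\textbf{Main obstacle.} The conceptual step is finding the right way to round the LP; the technical crux is matching the concentration bound to the specific entrywise normalization of Definition~\ref{def:norm}. The choice $B = \varepsilon/(2\sqrt{6\,\deg(q)\log m})$ and the hypothesis $|\Ch(q)| = \Omega(\log m/\varepsilon^2)$ are calibrated so that $\sum c_i^2 = O(\varepsilon^2/\log m)$ and McDiarmid beats the union bound over the $m$ pure deviations; verifying this alignment (and handling the minor degree-vs-$\Ch(q)$ discrepancy between $q$'s tree-degree and its number of children) is the only delicate part of the argument.
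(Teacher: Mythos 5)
Your proposal is correct and follows essentially the same route as the paper: round the LP solution by independently sampling each $x_c$ from the distribution $\{\alpha_x\}$, apply McDiarmid's inequality using the entry-wise bound $\varepsilon/(2\sqrt{6\,d\log m})$ from Definition~\ref{def:norm}, union-bound over the $m$ pure deviations to get success probability $1-2/m^2$, and resample an expected $O(1)$ times. The only cosmetic difference is that you concentrate the single function $(e_j-y)^T(\cdot)$ with Lipschitz constant $2B$ and deviation $\varepsilon/2$, whereas the paper concentrates $e_j^T(\cdot)$ with constant $B$ and deviation $\varepsilon/4$ and then handles the $y$ term by averaging over its support; both yield the same $2/m^3$ tail and the same final $\varepsilon$ budget.
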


\begin{proof}
Scalars $(\alpha_x)_{x \in U_{q,c}(y)}$ are nonnegative and sum up to one; hence, they induce a probability distribution over $U_{q,c}(y)$, for $c\in \Ch(q)$. Write $\alpha^c$ to denote this distribution. Also, let $\chi_c$ be the random variable that is equal to mixed strategy $x \in U_{q,c}(y) $ with probability $\alpha_x$, i.e., $\chi_c$ is drawn from $\alpha^c$. Note that $\E_{\alpha^c} [\chi_c ] = \sigma_c$. 

Let $d$ denote the number of children of $q$, $d := |\Ch(q)|$. For fixed $j \in [m]$, we consider function $f_j(\chi_1, \ldots, \chi_d) : = \sum_{c \in \Ch(q)} e^T_j A_{q,c} \chi_c$. The expected value of the function satisfies  $\E_{\alpha^1, \ldots, \alpha^d} [f_j] = \sum_{c \in \Ch(q)} e_j^T A_{q,c} \  \sigma_c$. 

Given that the underlying game is normalized (see Definition~\ref{def:norm}) and $d = \Omega\left(\frac{\log m }{ \varepsilon^2} \right)$, each entry of $A_{q,c}$ is between $0$ and $\frac{\varepsilon}{2\sqrt{6 d \log m}}$. 

This entry-wise bound implies that for any $c \in \Ch(q)$ and $\chi_1,..,\chi_c,.., \chi_d, \chi'_c \in U$ the following Lipscihtz condition holds for $f_j$: 
\begin{align*}
\left|f_j(\chi_1,\ldots, \chi_c, \ldots, \chi_d) -  f_j(\chi_1,\ldots, \chi'_c, \ldots, \chi_d)   \right| & \leq \frac{\varepsilon}{2\sqrt{6 d \log m}} . 
\end{align*}

Using McDiarmid's inequality (see Section~\ref{sec:notation}) we get that 
\begin{align*}
\Pr_{\alpha^1, \ldots, \alpha^d} (|f_j - \E[f_j] | \geq \varepsilon/4 ) &  \leq \frac{2}{m^3}. % \textrm{exp} \left(\frac{-2 \delta^2}{\sum_{i=1}^m c_i^2} \right)
\end{align*}

%In other words, if we draw mixed strategies from distribution $\alpha^c$s independently then event $\mathcal{E}$ holds with probability at least  $ 1 - 1/n$. 

Say, $\mathcal{E}$ denotes the event that for all $j \in [m]$, we have $|f_j - \E[f_j] | \leq \varepsilon/4$. Using the union bound we get that $\Pr_{\alpha^1,..,\alpha^d} (\mathcal{E}) \geq 1 - 2/m^2$. Therefore, the probabilistic method guarantees the existence of mixed strategies $x_c \in U_{q,c}(y)$, for $c \in \Ch(q)$, that satisfy $\mathcal{E}$. Note that to obtain such a collection of mixed strategies the expected number of times that we need to sample---the product distribution $\prod_{c \in \Ch(q)} \alpha^c$---is at most two. 

Say that mixed strategies $x_c \in U_{q,c}(y)$, for $c \in \Ch(q)$, satisfy event $\mathcal{E}$. Next we will show that $y$ is an $\varepsilon$-best response of $q$ against $(x_c)_{c \in \Ch(q)}$, and $z$. Overall, this implies that $y \in U_{p,q}(z)$, and we can set $E_{p,q}(z,y) = (x_c)_{c \in \Ch(q)}$. 

Mixed strategies $x_c$s satisfy $|f_j(x_1, \ldots, x_d) - \E[f_j] | \leq \varepsilon/4$ for all $j \in [m]$. That is, 
\begin{align}
\label{ineq:bounded}
\left| \sum_{c \in \Ch(d)} e^T_j A_{q,c} \ x_c -  \sum_{c \in \Ch(q)} e_j^T A_{q,c} \  \sigma_c  \right| & \leq \frac{\varepsilon}{4} \qquad \forall j \in [m].
\end{align}
 
Using inequality (\ref{ineq:bounded}) for each $j$ in the support of distribution $y \in \Delta^m$, we have 
\begin{align}
\label{ineq:bndy}
\left| \sum_{c \in \Ch(d)} y^T A_{q,c} \ x_c -  \sum_{c \in \Ch(q)} y^T A_{q,c} \  \sigma_c  \right| & \leq \frac{\varepsilon}{4}.
\end{align}

Note that $y$ satisfies inequality (\ref{ineq:brp}) in the linear program, i.e., $y$ is an $\varepsilon/2$-best response against $\sigma_c$s and $z$. Using inequalities (\ref{ineq:bounded}) and  (\ref{ineq:bndy}) to bound the change in the left-hand-side of (\ref{ineq:brp}) and the right-hand-side of (\ref{ineq:brp}) respectively, we get that $y$ is an $\varepsilon$-best response against $x_c$s and $z$:
\begin{align*}
y^T A_{q,p} z + \sum_{c \in \Ch(q)} y^T A_{q,c} \ x_c & \geq e_j^T A_{q,p} z + \sum_{c \in \Ch(q)} e_j^T A_{q,c} \  x_c - \varepsilon \qquad \forall j \in [m]
\end{align*}

Therefore, if LP($p,q,z,y$) is feasible then $y \in U_{p,q}(z)$. Also, note that the size of LP($p,q,z,y$) is at most $O(nm|U|)$, therefore we can solve the linear program in time polynomial in $|U|$. As mentioned above, given a feasible solution of LP($p,q,z,y$), to obtain $E_{p,q}(y,z)$ (i.e., a collection of mixed strategies $(x_c)_{c \in \Ch(q)}$ that satisfy $\mathcal{E}$)  the expected number of times that we need to sample is at most two. This establishes the running time bound stated in the lemma, and we get the desired claims. 
\end{proof}

Next we prove the main result.
\begin{theorem}
\label{thm:main}
Given an $n$-player $m$-action normalized polymatrix game over a tree, Algorithm~\ref{alg:dp} determines an $\varepsilon$-Nash equilibrium of the game in expected time \[ m^{O\left( \frac{ \log m (\log m + \log n - \log \varepsilon) }{\varepsilon^4} \right)}. \] 
\end{theorem}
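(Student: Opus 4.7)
The plan is to split the proof into a correctness argument and a running-time accounting, treating Lemmas~\ref{lem:backtrack} and~\ref{lemma:relax} as black boxes. For correctness, I would first invoke the small-support existence result of~\cite{BBP}. The factor $8$ and additive $\ln 8$ in the support size defining $U$ are calibrated precisely so that every $n$-player $m$-action game admits an $(\varepsilon/2)$-Nash equilibrium $(\hat{x}_q)_{q\in[n]}$ with each $\hat{x}_q\in U$. The extra $\varepsilon/2$ of slack is exactly what Algorithm~\ref{alg:dp} needs: \eqref{ineq:brp} enforces only an $\varepsilon/2$-best response, while Lemma~\ref{lemma:relax} absorbs the remaining $\varepsilon/2$ in its McDiarmid-based rounding. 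I would then prove, by induction on the depth of $q$ in the rooted tree, that $\hat{x}_q\in U_{p,q}(\hat{x}_p)$ as soon as Algorithm~\ref{alg:dp} finishes processing the parent $p$. The base case (leaf $q$) is immediate because $\hat{x}_q$ is an $\varepsilon/2$- and therefore $\varepsilon$-best response against $\hat{x}_p$. For the inductive step, if $|\Ch(q)|=o(\log m/\varepsilon^2)$ then the exhaustive-search branch in particular tries $(\hat{x}_c)_{c\in\Ch(q)}\in\prod_c U_{q,c}(\hat{x}_q)$ (available by the inductive hypothesis) and verifies an $\varepsilon$-best response; otherwise setting $\alpha_{\hat{x}_c}=1$ and $\sigma_c=\hat{x}_c$ produces a feasible solution of LP($p,q,\hat{x}_p,\hat{x}_q$), because \eqref{ineq:brp} at this witness reduces to the $(\varepsilon/2)$-Nash condition at $q$, after which Lemma~\ref{lemma:relax} supplies the required inclusion. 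Propagating the induction to the root yields $\hat{x}_r\in U_r$, at which point Lemma~\ref{lem:backtrack} extracts an actual $\varepsilon$-Nash equilibrium.

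For the running time I would tally work per triple $(q,z,y)$ processed. The outer loop runs $O(n)$ times and each processed node $p$ contributes $|\Ch(p)|\cdot|U|^2$ triples, giving $O(n|U|^2)$ in total. Each inner test costs either polynomial in $|U|$ (LP plus sampling, with a bounded expected number of resampling retries by Lemma~\ref{lemma:relax}) or at most $|U|^{|\Ch(q)|}\le |U|^{O(\log m/\varepsilon^2)}$ enumerations in the small-children branch. Plugging in $|U|=m^{O((\log m+\log n-\log\varepsilon)/\varepsilon^2)}$ from~\eqref{ineq:usize}, the exhaustive branch dominates at $m^{O(\log m(\log m+\log n-\log\varepsilon)/\varepsilon^4)}$ per triple, and the $n|U|^2$ prefactor only perturbs the exponent by lower-order terms, yielding the claimed expected bound.

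The hardest part will be the correctness induction through the LP. Because Lemma~\ref{lemma:relax} proves only soundness (LP feasibility $\Rightarrow y\in U_{p,q}(z)$), the inductive step must exhibit a witness that already satisfies the tighter $\varepsilon/2$-best-response version of~\eqref{ineq:brp}; this forces~\cite{BBP} to be invoked at tolerance $\varepsilon/2$ rather than $\varepsilon$, and is where the specific constants in the support size of $U$ matter. Everything else is routine composition of the two lemmas, plus a small geometric-series argument to keep the expected sampling cost inside the stated bound.
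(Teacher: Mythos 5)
Your proposal is correct and follows essentially the same route as the paper's own proof: invoke the small-support $\varepsilon/2$-equilibrium of~\cite{BBP}, induct up the tree to show $\hat{x}_q\in U_{p,q}(\hat{x}_p)$ (exhibiting the witness $(\hat{x}_c)_c$ for the exhaustive branch and the point-mass feasible solution for the LP branch), then apply Lemma~\ref{lem:backtrack} and account for $O(n|U|^2)$ tests each costing at most $|U|^{O(\log m/\varepsilon^2)}$ in expectation. Your explicit remark that Lemma~\ref{lemma:relax} gives only soundness, so completeness must be argued separately via a feasible witness at tolerance $\varepsilon/2$, is precisely the point the paper's proof also relies on, stated a bit more candidly.
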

\begin{proof}
Let $G=(V,E)$ be the underlying tree of the given normalized polymatrix game. First, we will prove that Algorithm~\ref{alg:dp} necessarily finds a mixed strategy in $U_r$, for the root $r$ of $G$, in the specified amount of time. Hence, via lemma~\ref{lem:backtrack}, we get that Algorithm~\ref{alg:dp} successfully finds an $\varepsilon$-Nash equilibrium of the game. %The time complexity of Algorithm~\ref{alg:dp} is established at the end of the proof. 

As mentioned above, it was established in~\cite{BBP} that every $n$-player $m$-action game admits an $\varepsilon/2$-Nash equilibrium $(\hat{x}_q)_{q \in [n]} $ where each $\hat{x}_q \in U$.\footnote{The change from $\varepsilon$-Nash equilibrium to $\varepsilon/2$-Nash equilibrium can be easily addressed by adjusting the size of $U$.} Hence, for each parent-child pair $(p,q) \in E$ there exists at least one set $U_{p,q}$ which is nonempty; in particular, $ \hat{x}_q \in U_{p,q} (\hat{x}_p)$. Moreover, for $z= \hat{x}_p$ and $y = \hat{x}_q$ the relaxation LP($p,q, z, y$) is guaranteed to be feasible.  Therefore, contingent on the fact that the ``if'' condition in Step~\ref{step:test} and~\ref{step:final} is performed correctly, we get that Algorithm~\ref{alg:dp} is guaranteed to move up the tree with non-empty $U_{p,q}$s and, finally, find a mixed strategy in $U_r$. 

Specifically, the correctness of the ``if'' condition (which we establish below) ensures that for an $\varepsilon/2$-Nash equilibrium $(\hat{x}_p)_{p \in [n]}$ and the sets $U_{p,q}$s populated by the algorithm we have $\hat{x}_q \in U_{p,q}(\hat{x}_p)$ for every parent-child pair $(p,q) \in E$. This follows via an inductive argument over levels of the tree: if $q$ is a leaf node then $\hat{x}_q$ is an $\varepsilon$ best response against $\hat{x}_p$ and we get the desired containment $\hat{x}_q \in U_{p,q}(\hat{x}_p)$. Furthermore, using the induction hypothesis that $\hat{x}_{c} \in U_{q,c}( \hat{x}_q) $ for all $c \in \Ch(q)$, we get that the ``if'' condition in Step~\ref{step:test} will be satisfied for $\hat{x}_q $ and $\hat{x}_p$, i.e., the algorithm will include $\hat{x}_q$ in $U_{p,q} ( \hat{x}_p)$ and the inductive claim holds. In particular, this observation implies that the algorithm will never encounter the situation wherein the set $U_{p,q}(x)$ is remain empty for all $x \in U$ after the for loops, i.e., the algorithm will always run to completion.  It is also relevant to note that the algorithm can set $E_{p,q}(\hat{x}_q, \hat{x}_p)$ to be any tuple $(x_c)_{c \in \Ch(q)}$ that satisfies satisfies the best response condition for $\hat{x}_q$ and $\hat{x}_p$. That is, it is not necessary that the algorithm sets $E_{p,q}(\hat{x}_q, \hat{x}_p) = (\hat{x}_c)_{c \in \Ch(q)}$. But still, the above mentioned argument goes though and we get that the algorithm always runs to completion. 

The ``if'' condition in Step~\ref{step:test} and~\ref{step:final} is performed $O(n|U|^2)$ times. Next we show that the ``if'' condition is verified correctly in expected time $|U|^{O\left(\frac{\log m}{\varepsilon^{2} }\right)} $. This overall establishes the stated claims. 

If the number of children of a player $q$ is $o\left( \frac{\log m}{\varepsilon^2} \right)$ then we can go over the entire set $\prod_{c \in \Ch(q)} U_{q,c}(y)$ in time $|U|^{o\left(\frac{ \log m}{\varepsilon^2} \right)}$ and determine whether the ``if'' condition in Step~\ref{step:test} is satisfied. The same argument works in Step~\ref{step:final}, if the number of children of the root $r$ is $o\left(\frac{ \log m}{\varepsilon^2} \right)$. 

For the remainder of the proof we consider the other case wherein the number of children of $q$ (or the root $r$) is $\Omega\left( \frac{ \log m}{\varepsilon^2} \right)$. In this case we verify the ``if'' condition in Step~\ref{step:test} (and Step~\ref{step:final}) by solving the linear-programming relaxation LP($p,q, z, y$) and employing Lemma~\ref{lemma:relax}. Note the the size of LP($p,q, z, y$) is $O(n|U|)$ and hence (again, via Lemma~\ref{lemma:relax}) in expected time polynomial in $|U|$ we can test if $y \in U_{p,q}(z)$ and find $E_{p,q}(z,y)$. Recall that this test is guaranteed to succeed for the $\varepsilon/2$-Nash equilibrium $(\hat{x}_q)_{q \in [n]} $, since the corresponding  LP($p,q, z, y$)s will be feasible. Hence, we get that Algorithm~\ref{alg:dp} proceeds up the tree with $\hat{x}_q$s, and eventually after processing the root $r$ finds an $\varepsilon$-Nash equilibrium of the game.

Steps~\ref{step:test} and~\ref{step:final} are executed $O(n|U|^2)$ times, and the expected running time of these steps is $|U|^{O\left(\frac{\log m}{\varepsilon^2} \right)}$. These observations establish the time complexity of the algorithm and complete the proof.  
\end{proof}

\section*{Acknowledgements}
This work was supported by NSF grants CNS-0846025, CCF-1101470, CNS-1254169, CNS-1518941, SUTD grant SRG ESD 2015 097, along with a Microsoft Research Faculty Fellowship, a Google Faculty Research Award, a Linde/ SISL Postdoctoral Fellowship and a CMI Wally Baer and Jeri Weiss postdoctoral fellowship. Katrina Ligett gratefully acknowledges the support of the Charles Lee Powell Foundation. The bulk of this work was conducted while the authors were at Caltech. The authors also collaborated on this paper while visiting the Simons Institute for the Theory of Computing.

\bibliographystyle{plain}
\bibliography{polymatrix}

\end{document}